\newtheorem{thm}{Theorem}
\newtheorem{lem}{Lemma}
\begin{document}

\title{Biomimetic Cloning of Quantum Observables}

\date{\today}

\author{U. Alvarez-Rodriguez} \email{unaialvarezr@gmail.com}
\affiliation{Department of Physical Chemistry, University of the Basque Country UPV/EHU, Apartado 644, 48080 Bilbao, Spain} 
\author{M. Sanz}
\affiliation{Department of Physical Chemistry, University of the Basque Country UPV/EHU, Apartado 644, 48080 Bilbao, Spain}
\author{L. Lamata}
\affiliation{Department of Physical Chemistry, University of the Basque Country UPV/EHU, Apartado 644, 48080 Bilbao, Spain}
\author{E. Solano}
\affiliation{Department of Physical Chemistry, University of the Basque Country UPV/EHU, Apartado 644, 48080 Bilbao, Spain}
\affiliation{IKERBASQUE, Basque Foundation for Science, Alameda Urquijo 36, 48011 Bilbao, Spain}

\begin{abstract}
We propose a bio-inspired sequential quantum protocol for the cloning and preservation of the statistics associated to quantum observables of a given system. It combines the cloning of a set of commuting observables, permitted by the no-cloning and no-broadcasting theorems, with a controllable propagation of the initial state coherences to the subsequent generations. The protocol mimics the scenario in which an individual in an unknown quantum state copies and propagates its quantum information into an environment of blank qubits. Finally, we propose a realistic experimental implementation of this protocol in trapped ions.
  \end{abstract}

\maketitle

Quantum information is a research field that studies how to perform computational tasks with physical platforms in the quantum regime. Coping with complex quantum systems could give rise to an exponential gain in computational power and a new branch of possibilities as compared with classical computing~\cite{Feynman82,Lloyd96}. One of the turning points of quantum information is the no-cloning theorem~\cite{nc}, which expresses the impossibility of copying an unknown state. Therefore, the notion of perfect universal quantum cloning was abandoned, and replaced by the cloning of restricted families of states or cloning with imperfect fidelities. A paradigmatic instance is the Buzek and Hillery universal quantum cloning machine~\cite{bh}, among other cases~\cite{ska,lk,gams,luk,med}. Another approach is {\it partial quantum cloning}, consisting in the copy of partial quantum information of a quantum state. In this sense, an interesting example is the cloning of the statistics associated with an observable~\cite{Paris}. However, these methods are limited due to the classical character of the information one replicates, since it is impossible to clone two non-conmuting observables with the same unitary~\cite{nld,joint}.

For a long time, human beings mimicked nature to create or optimize devices and machines, as well as industrial processes and strategies. In particular, biomimetics is the branch of science which designs materials and machines inspired in the structure and function of biological systems~\cite{rev, vie, qb, cil, sch, qal}. Analogously, novel quantum protocols may be envisioned by mimicking macroscopic biological behaviors at the microscopic level, in what we call {\it quantum biomimetics}. 

Living entities are characterized by features such as self-reproduction, mutations, evolution or natural selection. Among them, the ability to self replicate is the most basic one. In fact, even though they are allowed to perfectly replicate classical information, biological systems only reproduce part of this information in the following generations. A paradigmatic example is DNA replication inside the nucleus of the cell, since only sequences of bases are copied, but not all of the information about the physical state of the molecule.

\section*{Results}
In this article, we propose a formalism for cloning partial quantum information beyond the restrictions imposed by the aforementioned no-go theorems. We use a family of increasingly growing entangled states~\cite{got,ent}  in order to preserve and propagate the information of an initial state. In particular, we are able to transmit more than just classical information to the forthcoming generation, i.e., both the diagonal elements and coherences of the associated density matrix. Finally, we analyze the feasibility of a possible experimental implementation with trapped ions. 
 
To introduce our protocol, let $\rho\in \mathfrak{B} (\mathbb{C},n)$ and $\rho_e \in \mathfrak{B}(\mathbb{C},n)$ be an arbitrary state and a blank state, respectively, and let $\theta$ be a Hermitian operator.  We define the cloning operation $U(\theta,\rho_e)$ as
\begin{equation}
\label{blugo}
\langle \theta \rangle _{\rho}\equiv\textrm{Tr}(\rho\theta)=\langle \theta \otimes \mathbb{1} \rangle _{U(\rho \otimes \rho_e)U^{\dagger}} = \langle \mathbb{1} \otimes \theta \rangle _{U(\rho \otimes \rho_e)U^{\dagger}}.
\end{equation}
We denote each subspace as an {\it individual}, see Fig.~\ref{dos}. The expectation value of $\theta$ in the initial state is cloned into both subspaces of the final state. This is the cloning machine for observables introduced in Ref.~\cite{Paris}. Here, we extend their results to an arbitrary dimension and show the existence of an additional operator $\tau$, which does not commute with $\theta$, and whose statistics is encoded in the global state of the system
\begin{equation}
\label{gringo}
 \langle \tau \rangle_{\rho}\equiv\textrm{Tr}(\rho\tau)= \langle \tau \otimes \tau \rangle_{U(\rho \otimes \rho_e)U^{\dagger}}.
\end{equation}

\begin{figure}[h!!!]
\begin{center}
\includegraphics[width=6.8 cm]{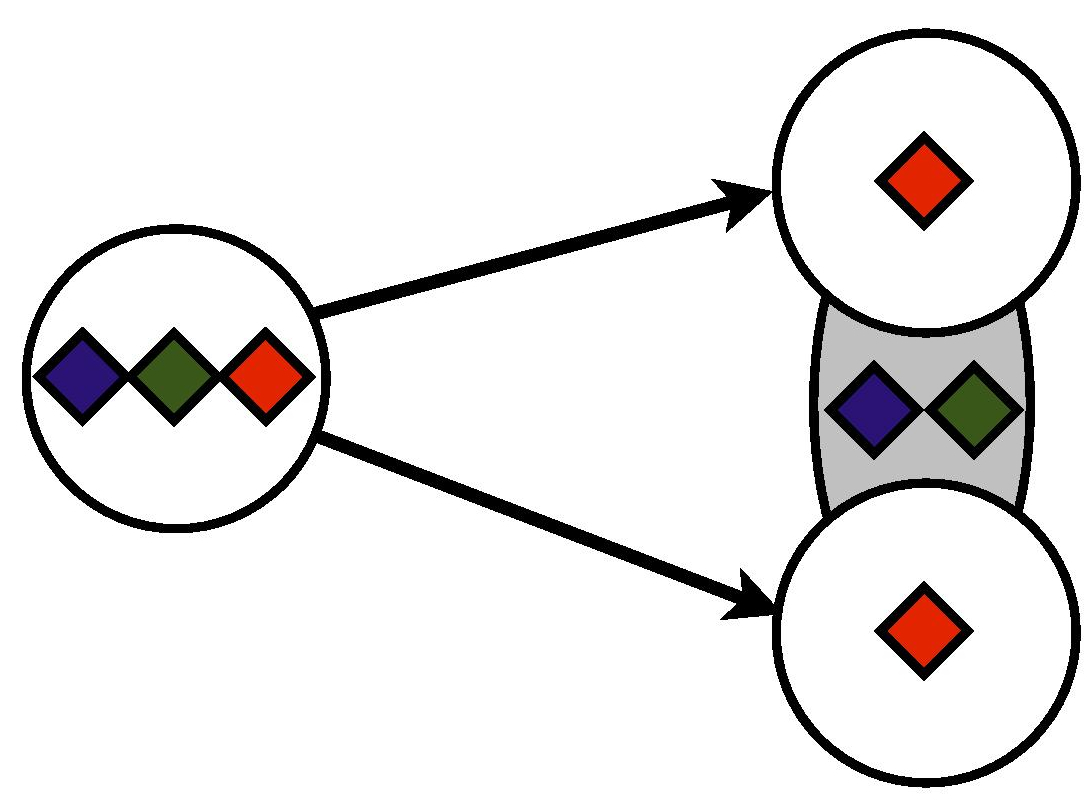}
\caption{\textbf{Cloning and transmission of quantum information.} In this scheme of our protocol the individuals are plotted with circles. The red diamond (centered inside each circle) represents the information that is cloned in every individual of the forthcoming generation, shown in Eq.~\eqref{blugo}. The other diamonds represent the information that is transmitted onto the global state of each generation relative to Eq. \eqref{gringo}.}
\label{dos}
\end{center}
\end{figure}

At this stage, it is convenient to point out the differences between both processes. On the one hand, Eq.~\eqref{blugo} can be regarded as a replication of a characteristic from an individual into its progeny. In this sense, the global state is traced out and the only important feature is the information of the individual. On the other hand, Eq.~\eqref{gringo} can be considered as a spread of the quantum coherences of the initial individuals to the forthcoming generations as a whole. Our formalism combines the notions of cloning and preserving quantum information without contradicting the no-cloning and no-broadcasting theorems~\cite{nbt}. This allows the propagation of the statistics of two non-commuting observables in a controllable way.

The explanation of the cloning method requires a selection of the basis, provided here by an external environment. The dark state of an unknown environment dynamics is the blank qubit for the copying process, i.e., the state that we define as the $|0\rangle$. This point will be relevant later in the discussion about the quantumness of the process.

\textbf{Cloning method.}
Let us work without loss of generality in a basis in which $\theta$ is diagonal. Then, we define the cloning operation $U_{n}(\theta, \rho_e)$ in terms of the generator of $n$-dimensional irreducible representation of the translation group $\{ x_{ni} \}$ and the projectors into each subspace $\{s_{ni} \}$. We clarify that $U_n$, after all, does not explicitly depend on $\theta$.
\begin{eqnarray}
x_{ni} |k\rangle &=&\left \{ \begin{array}{lcr} |k+i-1\rangle & \textrm{if} & k\le n-i+1 ,\\ |k+i-1-n\rangle & \textrm{if} & k >  n-i+1 ,\end{array} \right. \\  s_{ni}&=& |i\rangle\langle i | ,\\ \label{tpm} U_n&=& \sum^{n}_{i=1} s_{ni} \otimes x_{ni} = \bigoplus^{n}_{i=1} x_{ni}.
\end{eqnarray}
For example, for $n=2$, $U_2=U_{\textrm{CNOT}}$, and for $n=3$, $U_3$ is given by
\begin{equation}
\label{mono}
 U_3 = \left( \begin{array}{ccc} 1&0&0\\0&1&0\\0&0&1 \end{array} \right)\oplus \left( \begin{array}{ccc} 0&0&1\\1&0&0\\0&1&0 \end{array} \right)\oplus \left( \begin{array}{ccc} 0&1&0\\0&0&1\\1&0&0 \end{array} \right).
\end{equation}

We will demonstrate below that Eq.~\eqref{tpm} fullfils Eq.~\eqref{blugo}. The forthcoming theorems are proved in the Supplementary Information~\cite{txup}.

\begin{thm}
Let $\mathcal{H}\in\mathbb{C}^n$ be a Hilbert space of dimension $n$, $U\in\mathcal{H}\otimes\mathcal{H}$ the unitary operation defined in Eq.~\eqref{tpm}, $\rho_e=|0\rangle\langle0|$, and $\rho,\theta$ $\in\mathfrak{B}(\mathcal{H})$ bounded Hermitian operators. Then, the unitary $U$ satisfies Eq.~\eqref{blugo}.
\end{thm}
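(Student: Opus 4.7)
The plan is to expose the block-controlled structure of $U$ in a suitably chosen basis and then evaluate both marginals of $U(\rho\otimes\rho_{e})U^{\dagger}$ by direct calculation. Concretely, I would first pick the eigenbasis $\{|j\rangle\}_{j=1}^{n}$ of $\theta$, writing $\theta=\sum_{j}\theta_{j}|j\rangle\langle j|$ and $\rho=\sum_{j,k}\rho_{jk}|j\rangle\langle k|$. In this basis $s_{ni}=|i\rangle\langle i|$, so the defining expression \eqref{tpm} yields at once
\begin{equation}
U(|j\rangle\otimes|\psi\rangle)=|j\rangle\otimes x_{nj}|\psi\rangle,
\end{equation}
i.e.\ $U$ applies the cyclic shift $x_{nj}$ to the second register conditioned on the first being $|j\rangle$.

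The next step is to use that the shifts $\{x_{ni}\}_{i=1}^{n}$ are precisely the $n$ elements of the cyclic group; acting on the single blank vector $|0\rangle$ they generate an orthonormal set which, with the natural labeling convention, satisfies $x_{nj}|0\rangle=|j\rangle$ and therefore $U(|j\rangle\otimes|0\rangle)=|j\rangle\otimes|j\rangle$. Substituting the expansion of $\rho$ one obtains the maximally correlated extension
\begin{equation}
U(\rho\otimes|0\rangle\langle 0|)U^{\dagger}=\sum_{j,k}\rho_{jk}\,|jj\rangle\langle kk|.
\end{equation}

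Finally I would take the two partial traces. Since $\mathrm{Tr}\,|j\rangle\langle k|=\delta_{jk}$, the off-diagonal contributions vanish on both sides and the two marginals coincide with the fully dephased matrix $\sum_{j}\rho_{jj}|j\rangle\langle j|$; because $\theta$ is diagonal in this basis, both $\langle\theta\otimes\mathbb{1}\rangle$ and $\langle\mathbb{1}\otimes\theta\rangle$ reduce to $\sum_{j}\rho_{jj}\theta_{j}=\mathrm{Tr}(\rho\theta)$, which is exactly what \eqref{blugo} demands. The computation is essentially forced once the controlled block structure of $U$ is made explicit; the only point that really requires care, and the one I would flag as the main obstacle in a tidy write-up, is the bookkeeping that reconciles the 1-indexed definition of $x_{ni}$ with the symbol $|0\rangle$ for the blank state, ensuring that the orbit $\{x_{nj}|0\rangle\}_{j=1}^{n}$ sweeps the entire computational basis exactly once. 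Everything else follows from linearity of the trace.
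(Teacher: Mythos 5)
Your proof is correct and follows essentially the same route as the paper's: both exploit the block structure $U=\sum_i s_{ni}\otimes x_{ni}$ acting on the blank state to reduce the computation to the diagonal ($i=j$) terms, the paper by inspecting which terms of $\sum_{i,j}s_i\rho s_j^\dagger\theta\otimes x_i\rho_e x_j^\dagger$ survive the trace and you by writing the output state explicitly as $\sum_{j,k}\rho_{jk}|jj\rangle\langle kk|$ and taking partial traces. Your explicit flagging of the off-by-one bookkeeping between the $1$-indexed $x_{ni}$ and the blank state $|0\rangle$ is a fair point the paper glosses over, but it does not change the substance of the argument.
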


We analyze now if the cloning unitary $U$ transmits information from the initial individual to the progeny as a whole, apart from cloning its expectation value independently. The process mimics the information transmission underlying collective behaviors present in many biological systems, such as self-organizing neurons.

\begin{thm} 
Let $U\in\mathcal{H}\otimes\mathcal{H}$ be the unitary transformation defined in Eq.~\eqref{tpm}. Then, there exists a bounded antidiagonal operator $\tau$, whose matrix elements are $0$ or $1$, fulfilling Eq.~\eqref{gringo}.
\end{thm}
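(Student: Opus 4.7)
The strategy is to compute the output state $\sigma := U(\rho\otimes|0\rangle\langle 0|)U^\dagger$ explicitly and then to compare the two sides of Eq.~\eqref{gringo} as bilinear functionals of the matrix entries of $\rho$ and $\tau$.

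First, I would evaluate the action of $U$ on the product basis states $|k\rangle|0\rangle$. From $U_n=\sum_i s_{ni}\otimes x_{ni}$ and $s_{ni}|k\rangle=\delta_{ik}|i\rangle$, only the term $i=k$ contributes, so that $U|k\rangle|0\rangle = |k\rangle\otimes x_{nk}|0\rangle$. The blank state is the ``dark'' element of the cyclic basis, and one reads off from the definition of $x_{ni}$ that $x_{nk}|0\rangle=|k\rangle$; hence $U$ acts as a generalized CNOT, $|k\rangle|0\rangle\mapsto|k\rangle|k\rangle$, and by linearity
\begin{equation*}
\sigma \;=\; \sum_{k,l=1}^{n}\rho_{kl}\,|k\rangle\langle l|\otimes|k\rangle\langle l|.
\end{equation*}

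Second, I would compute the right-hand side of Eq.~\eqref{gringo} for a generic Hermitian $\tau$ with entries $\tau_{ij}=\langle i|\tau|j\rangle$. A direct calculation using the tensorised structure of $\sigma$ gives
\begin{equation*}
\langle\tau\otimes\tau\rangle_\sigma \;=\; \sum_{k,l}\rho_{kl}\,\langle l|\tau|k\rangle\,\langle l|\tau|k\rangle \;=\; \sum_{k,l}\rho_{kl}\,\tau_{lk}^{\,2},
\end{equation*}
while the left-hand side is $\mathrm{Tr}(\rho\tau)=\sum_{k,l}\rho_{kl}\tau_{lk}$. Requiring equality for every admissible $\rho$ forces the pointwise constraint $\tau_{lk}^{\,2}=\tau_{lk}$, i.e., every entry of $\tau$ must belong to $\{0,1\}$.

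Third, I would restrict to antidiagonal $\tau$. The only nonzero entries then sit at positions $(i,n+1-i)$, which are precisely the pairs fixed by the transposition enforcing Hermiticity, so every antidiagonal $0/1$ matrix is automatically Hermitian (and bounded, trivially, in finite dimension). Existence is then settled by exhibiting, for example, the exchange operator $\tau=\sum_{i}|i\rangle\langle n+1-i|$, which is antidiagonal, has entries in $\{0,1\}$, does not commute with a generic diagonal $\theta$, and satisfies Eq.~\eqref{gringo} by the identity above.

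The only step that calls for real care is the explicit evaluation of $U(\rho\otimes|0\rangle\langle 0|)U^\dagger$; one must track the shift convention of the translation generators $x_{ni}$ carefully to confirm that the blank state is mapped onto $|k\rangle$ in the second register, producing the generalized-CNOT structure. Once this is in hand, the proof reduces to the algebraic identity $\tau_{lk}^{\,2}=\tau_{lk}$, and the antidiagonal $0/1$ solutions are immediate.
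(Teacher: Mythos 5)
Your proof is correct, and it reaches the same conclusion as the paper by what is at bottom the same direct trace computation, but the packaging is genuinely different and arguably sharper. The paper keeps $U=\sum_i s_{ni}\otimes x_{ni}$ in operator form, expands $\textrm{Tr}[\sum_{i,j}s_i\rho s_j^\dagger\tau\otimes x_i\rho_e x_j^\dagger\tau]$, and argues which index pairs $(i,j)$ survive the trace given that $\tau$ is antidiagonal; it then observes that the surviving terms reproduce $\textrm{Tr}[\rho\tau]$ provided the entries of $\tau$ are $0$ or $1$. You instead evaluate the output state explicitly as $\sigma=\sum_{k,l}\rho_{kl}\,|k\rangle\langle l|\otimes|k\rangle\langle l|$ (the generalized-CNOT structure, which is correct once the paper's slightly inconsistent labeling of the blank state is fixed) and reduce Eq.~\eqref{gringo} to the pointwise identity $\tau_{lk}^2=\tau_{lk}$. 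This buys you a clean \emph{characterization}: Eq.~\eqref{gringo} holds for all $\rho$ if and only if every entry of $\tau$ is $0$ or $1$, with antidiagonality playing no role in the identity itself --- it is only needed so that $\tau$ fails to commute with the diagonal $\theta$, which is the point of Theorem~2. The paper's route, by contrast, only verifies sufficiency for the antidiagonal case. One small slip: your claim that every antidiagonal $0/1$ matrix is ``automatically Hermitian'' is false, since transposition sends the entry at $(i,n+1-i)$ to the distinct position $(n+1-i,i)$ unless $i=n+1-i$; Hermiticity requires the antidiagonal pattern to be symmetric. This does not damage the proof, because your witness $\tau=\sum_i|i\rangle\langle n+1-i|$ is symmetric, and the theorem as stated only asks for a bounded antidiagonal operator with $0/1$ entries.
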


Notice that, in previous theorems, we have worked in a basis in which $\theta$ is diagonal. However, the cloning operation can be rewritten in any basis just by rotating the matrix with the proper unitary, $\theta'=R^\dag \theta R$, transforming $\tau$ into  $\tau'=R^\dag \tau R$ and $U$ into $U'=(R^\dag \otimes R^\dag)U(R\otimes\mathbb{1})$.
 
When the cloning operation is sequentially reapplied, it propagates the information of the initial state, i.e., it transmits the full statistics of the density matrix, Tr$[\rho\sigma_{i}]$, $i=1,2,3$. We show here how two-qubit operations are extended into $m$-qubit states,
\begin{eqnarray}
\label{tomate}
\nonumber U_{i,i+\frac{m}{2}}= x_1^{\otimes i-1}\otimes s_1 \otimes x_1^{\otimes m-i}+ x_1^{\otimes i-1}\otimes s_2 \otimes x_1^{\otimes \frac{m}{2}-1}\otimes x_2 \otimes x_1^{\otimes \frac{m}{2}-i} ,
\end{eqnarray}
where the subscripts of $U$ refer to the pair of qubits that is acted upon. The cloning for the subsequent generations, see Fig.~\ref{mas}, is constructed through the product of pairwise cloning operations, $U=\prod^{\frac{m}{2}}_{i=1} U_{i,i+\frac{m}{2}}$. For instance, the density matrices of the first and second generations reads
\begin{eqnarray}
\rho_{1}&=&(U_{1,2}) (\rho_0 \otimes \rho_{e}) (U_{1,2})^{\dag}, \\ \label{nueve} \rho_{2}&=&(U_{1,3}\,U_{2,4})(\rho_{1}\otimes\rho_{e}\otimes\rho_{e})(U_{1,3}\,U_{2,4})^{\dag} .
\end{eqnarray}
Therefore, the mechanism is straightforwardly generalizable for obtaining sequential generations with the same information in each individual, in the spirit of a quantum genetic code. Moreover, although Eq.~\eqref{tomate} holds only for qubits, it is noteworthy to mention that an extension to higher dimensions is possible. This can be achieved using as building block the unitary gate $U_n$ defined in Eq.~\eqref{tpm}.

\begin{figure}[h!]
\begin{center}
\includegraphics[width=7.5cm]{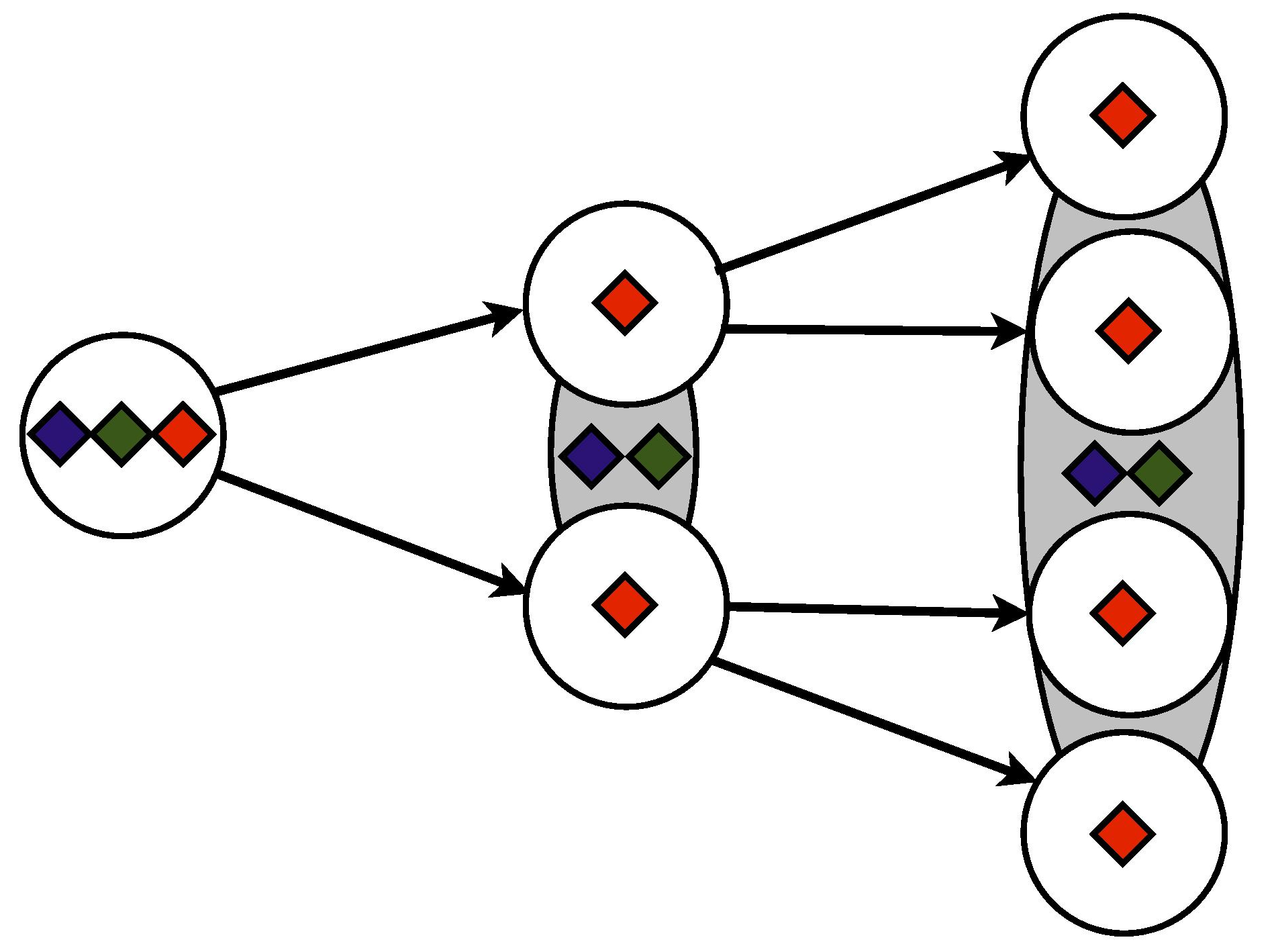}
\caption{\textbf{Iteration of the cloning process.} Scheme of the sequential cloning of the information encoded in an initial state into individuals of subsequent generations.}
\label{mas}
\end{center}
\end{figure}

We will present now a counter-example showing that the cloning operation is not unique. When $n$ is not a prime number, $n=kl$,  there are other cloning unitaries apart from $U_{kl}$. For instance, an additional $U'_{kl}$ is constructed via the solutions in each subspace, $U_k$ and $U_l$, respectively,
\begin{equation}
\label{wek}
U'_{kl}=\sum^{k}_{i=1}\sum^{l}_{j=1}s_{ki}\otimes s_{lj} \otimes x_{ki} \otimes x_{lj}.
\end{equation}
This result shows that it is possible to mix information among subspaces of different dimensions. An example for $n=6$ is depicted in Fig.~\ref{uno}.\\

\begin{figure}[t!]
\begin{center}
\includegraphics[width=8.75cm]{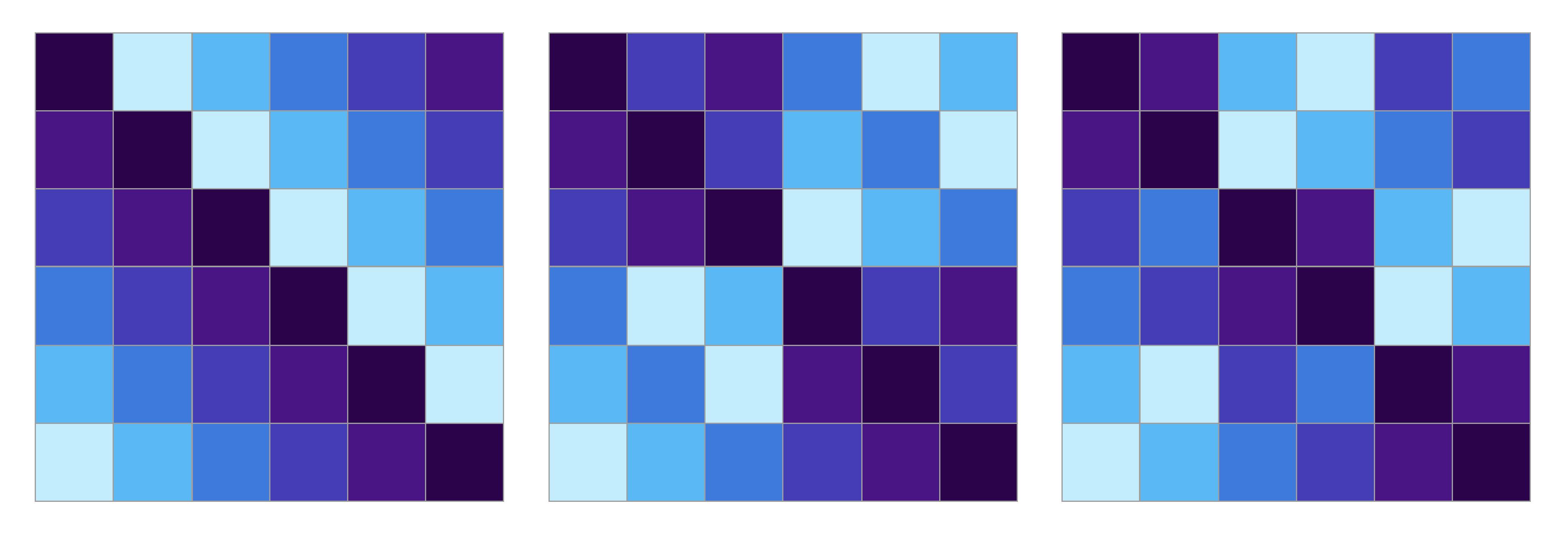}
\caption{\textbf{Cloning unitary operations for ${\bf n=6}$.} In these graphics, each color represents a matrix of the translation group. Therefore, each of the three two-dimensional arrays groups the six matrices of the translation group for $n=6$. The first array combines the $x_{ni}$ operations of Eq.~\eqref{tpm}, while the second and third show $x_{ki} \otimes x_{lj}$ of Eq.~\eqref{wek} with $k,l=2,3$ and $k,l=3,2$, respectively.}
\label{uno}
\end{center}
\end{figure}

\textbf{Quantumness and Classicality.}
In this section, we analyze the quantumness of the proposed biomimetic cloning protocol. In Ref.~\cite{cri}, Meznaric {\it et al.} recently introduced a criterion to determine the nonclassicality of an operation $\Omega$. This method is based on the distance between the outcome of the operation and the pointer basis einselected by the environment. The quantum operation $\Omega$ is composed with the completely dephasing channel $\Gamma$, provided by the environment. The measurement of nonclassicality is obtained by maximizing over all states the relative entropy of both operations acting on an arbitrary quantum state. The completely dephasing channel plays the role of an external environment that einselects the pointer basis, in which $\Omega$ is represented. Effectively, as proposed in Ref.~\cite{cri}, any operation is quantum whenever its column vectors are superpositions of the elements of the basis. On the contrary, the operation is classical if they are just permutations of the basis.

It is natural to identify the basis of the ancillary qudits, $\rho_e$, with the pointer states, since we assume that the system naturally provides blank qudits. By applying the classification formalism for qubits, the process of copying $\sigma_{z}$ in the individual state and $\sigma_{x}$ in the global state is classical, because the $U_{\textrm{CNOT}}$ is written in terms of permutations of the pointer states. Nevertheless, the complementary operation 
\begin{eqnarray}
U_x=\frac{1}{\sqrt{2}} \left( \begin{array}{cccc} 1&1&0&0\\0&0&1&-1\\0&0&1&1\\1&-1&0&0 \end{array}\right), 
\end{eqnarray}
which clones $\sigma_{x}$ in the individual state and $\sigma_{z}$ in the global state, is quantum.

Another possibility is to consider that the classical pointer basis is defined by the $\Omega$ operator itself. This means that we can construct the quantum channel $\mathcal{E} (\rho)$ which maps the initial state to any of the outcomes, considering the blank qudit and the other outcome as the environment. By construction, the unitary given by Eq.~\eqref{tpm} leads to an injective channel, since the number of Kraus operators is the same as the square of the matrix dimension. Therefore, the only fixed point is the identity and the cloning operations are classical when written in the einselected basis~\cite{mik}.

According to these results, the cloning formalism copies classical information but preserves quantum correlations, which makes the global operation quantum. The interpretation of this property is that the quantum part of the information is encoded  in the global state. By analogy with biological systems, the environment plays a fundamental role in the kind of information that, similar to quantum darwinism~\cite{qdw}, is preserved and cloned through a pure quantum mechanism. In our work, the quantumness is only revealed when considering collectively all outcomes of the copying process. 

\textbf{ Experimental Implementation.}
We consider that an experimental realization of our protocol in a quantum platform sets a significant step towards quantum artificial life. We propose an experimental setup of our formalism in trapped ions, arguably the most advanced quantum technologies in terms of coherence times and gate fidelities~\cite{boat}. Current experimental resources would allow copying processes for qubits and qutrits of three and two generations, respectively. The number of logical gates and trapped ions required to perform the experiment is presented in Table~\ref{reo}. We point out that the only limitations for performing trapped-ion experiments with higher dimensions and larger number of generations are decoherence times and gate errors. None of these are fundamental and near future improvements may allow to reach the implementation of higher dimensional individuals and many more generations with  near perfect fidelities.

\begin{table}[h!]
\caption{{\bf Technological and computational resources.} We show the number of quantum gates and trapped ions needed to perform the cloning experiment with qubits and qutrits, respectively.\\}
\centering
\begin{tabular}{|lcccc|}
\hline
\multicolumn{5}{| c |}{Qubit}\\
\hline
Sequential generation step &  0 & 1 & 2 & 3 \\ \hline
Total gates (2 qubit gates) & 0 & 19(2) & 57(6) & 133(14) \\ \hline
Ions & 1 & 2 & 4 & 8 \\ 
\hline\hline
\multicolumn{5}{| c |}{Qutrit}\\
\hline
Sequential generation step &  0 & 1 & 2 & - \\ \hline
Total gates (2 qubit gates) & 0 & 38(4) & 114(12) & -  \\ \hline
Ions & 2 & 4 & 8 & - \\ 
\hline 
\end{tabular}
\label{reo}
\end{table}

We encode the initial state in one of the ions, while the rest are initialized in the $|0\rangle$ state. The cloning operation for two qubits is the CNOT gate, which can be reproduced performing the M$\o$lmer-S$\o$rensen gate~\cite{ms} and a sequence of single qubit gates~\cite{cz}, $U_{\textrm{CNOT}}=-(\sigma_{x}\otimes\sigma_{z}) P_{1}P_2^{-1}H_{2}RP_{1}H_{1}P_{1}RP_2$. Here, $P$ is the phase gate, $H$ is the Hadamard gate, $R$ is the M$\o$lmer-S$\o$rensen gate, and the subindex denotes the ion number. We express the gates as products of carrier transitions $R^c$ and a phase factor,
\begin{eqnarray}
P=e^{-i3\pi/4}R^c (\pi,0) R^c (\pi,\pi/4),\\  H=e^{-i\pi/2} R^c(\pi,0)R^c(\pi/2,\pi/2), \\ \sigma_{z}=e^{-i\pi/2} R^c(\pi, 0)R^c(\pi,\pi/2).
\end{eqnarray}
The first and second generations of cloned qubits are obtained as indicated in Eq.~\eqref{nueve}. The next step in the copying process, i.e., the third generation, is given by $\rho_{3}=(U_{1,5} \, U_{2,6} \, U_{3,7} \, U_{4,8}) (\rho_{2} \otimes \rho_{e}^{\otimes 4} )(U_{1,5} \, U_{2,6} \, U_{3,7} \, U_{4,8})^{\dag}$.\\

We consider now the implementation of the qutrit case in trapped ions. As there is no direct access to qutrit gates, we have to engineer a protocol with one and two-qubit gates. We suggest to add three ancillary levels that split the unitary operation into three subspaces of $4 \times 4$ matrices. To achieve this, for the $U_{3}$ given in Eq.\eqref{mono}, the modified $U'_3$ is
\begin{equation}
 U'_{3}= \mathbb{1}_{4} \oplus \left( \begin{array}{cccc}  1&0&0&0\\0&0&0&1\\0&1&0&0\\0&0&1&0 \end{array} \right) \oplus \left( \begin{array}{cccc}  1&0&0&0\\0&0&1&0\\0&0&0&1\\0&1&0&0 \end{array} \right).
\end{equation} 
The first submatrix does not require any quantum gate. The second and third submatrices are the products of two CNOT gates, for which the role of control and target ions is interchanged. Hence, the implementation is reduced from qutrit to qubit operations.

\section*{Discussion}
In summary, inspired by biological systems, we have brought concepts and applications into quantum information theory. For instance, our partial quantum cloning method makes use of global and local measurements in order to encode information of nonconmuting observables beyond the classical realm. Moreover, we prove that the information transmission is purely quantum for a certain kind of operators. In parallel, we show that it is possible to implement our ideas in an ion-trap platform with current technology. 

Replication is the most fundamental property that one may require from a biological system. We leave to forthcoming works the introduction of additional biological behaviors such as mutations, evolution and natural selection, englobed by the frame of quantum artificial life. This proposal should be considered as the first step towards mimicking biological behaviours in controllable quantum systems, a concept that we have called quantum biomimetics.

\section*{Acknogledgements}
The authors acknowledge funding from Basque Government  BFI-2012-322 and IT472-10 grants, Spanish MINECO FIS2012-36673-C03-02, Ram\'on y Cajal Grant RYC-2012-11391, UPV/EHU UFI 11/55, CCQED, PROMISCE, and SCALEQIT European projects.

\section*{Author Contributions}
U. A.-R. made the calculations while U. A.-R., M. S., L. L. and E. S. developed the model and wrote the manuscript.

\section*{Additional Information}
{\bf Competing financial interests:} The authors declare no competing financial interests.\\

\pagebreak
\center{{\LARGE Supplementary Material}}\\
\vspace{1cm}
In this Supplemental Material we add mathematical calculations that complement the information of the article. 

\section*{Theorems}
$\bullet$ Here we give the proof of theorem 1 in the paper.
\begin{proof}
\begin{eqnarray*}
\langle \theta\otimes\mathbb{1}\rangle &=& \textrm{Tr}[U (\rho\otimes\rho_{e})U^{\dag} (\theta\otimes\mathbb{1})] \\ &=&\textrm{Tr} [\sum^{n}_{i,j=1} s_i \rho s^{\dag}_j \theta \otimes x_i \rho_{e} x^{\dag}_{j}].
\end{eqnarray*}
When looking at the second subspace, $x_i \rho_e x^\dag _j$, only the terms with $i=j$ are left in the diagonal, therefore only those are relevant for the trace. We consider $i=j$ in the first subspace, $s_i \rho s^\dag _i \theta$. The remaining is the product between the diagonal terms of the density matrix $\rho$ and the observable $\theta$, that is precisely.
\begin{equation}
\nonumber \textrm{Tr}[\rho \theta]=\langle \theta \rangle
\end{equation}
This shows that the mean value of $\theta$ is effectively cloned in the first individual by means of $U$. The proof for the second individual is obtained in a similar way.
\end{proof}

$\bullet$ We enunciate and prove a lemma that clarifies the properties of the cloning operation U.
\begin{lem}
Let $U: \mathcal{H}\otimes\mathcal{H}\to \mathcal{H}\otimes\mathcal{H}$ be the unitary transformation given by the previous theorem. Then, there exists a bounded operator $\tau \neq 0,1$ such that Eq. $(2)$ is fulfilled iff $U = (V \otimes V)(w_1 \oplus w_2) F (V \otimes V)^{\dagger}$, with $V$ and $w_1, w_2$ arbitrary unitary matrices of dimensions $d$ and $\frac{d^2}{2}$ respectively, and $F$ a flip operator such that $F (\sigma_z \otimes 1_{\frac{d}{2}} \otimes \sigma_z \otimes 1_{\frac{d}{2}}) F^{\dagger} = \sigma_z \otimes 1_{\frac{d^2}{2}} $.
\end{lem}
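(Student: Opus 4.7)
The plan is to reformulate Eq.~(2) as an operator equation and then handle the two implications of the iff separately. By linearity of Eq.~(2) in $\rho$, the statement is equivalent to the self-consistency condition
\begin{equation*}
\tau = \text{Tr}_2\bigl[(\mathbb{1}\otimes\rho_e)\,U^\dagger(\tau\otimes\tau)U\bigr],
\end{equation*}
a nonlinear fixed-point equation for $\tau$ once $U$ and $\rho_e$ are fixed. The lemma then amounts to characterizing exactly the $U$'s for which this equation admits a solution with $\tau\neq 0,\mathbb{1}$.

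For the sufficiency direction, given $U=(V\otimes V)(w_1\oplus w_2)F(V\otimes V)^\dagger$, the natural candidate is $\tau=V\widetilde{D}V^\dagger$ with $\widetilde{D}=\sigma_z\otimes\mathbb{1}_{d/2}$. Substituting and canceling the outer rotations $(V\otimes V)$, the right-hand side reduces to a partial trace of $F^\dagger(w_1\oplus w_2)^\dagger(\widetilde{D}\otimes\widetilde{D})(w_1\oplus w_2)F$ against $V^\dagger\rho_e V$. The defining property $F(\widetilde{D}\otimes\widetilde{D})F^\dagger=\sigma_z\otimes\mathbb{1}_{d^2/2}$, combined with the block-diagonality of $(w_1\oplus w_2)$ with respect to this operator, allows the commutator $[(w_1\oplus w_2),\sigma_z\otimes\mathbb{1}_{d^2/2}]=0$ to be exploited to collapse the conjugated expression into the factorized form $\widetilde{D}\otimes\mathbb{1}_d$; the partial trace then reproduces $\widetilde{D}=V^\dagger\tau V$.

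For necessity, suppose a nontrivial $\tau$ exists. Iterating Eq.~(2) with $\rho=\tau^n/\text{Tr}[\tau^n]$ produces algebraic constraints on the moments of $\tau$ which force $\tau^2=\mathbb{1}$ (after rescaling) with both eigenvalues of multiplicity $d/2$; in particular $\tau$ diagonalizes as $\tau=V\widetilde{D}V^\dagger$ with $\widetilde{D}=\sigma_z\otimes\mathbb{1}_{d/2}$. Setting $\widetilde{U}=(V\otimes V)^\dagger U(V\otimes V)$ and choosing a permutation $F$ that transforms $\widetilde{D}\otimes\widetilde{D}$ into $\sigma_z\otimes\mathbb{1}_{d^2/2}$, the condition on $\widetilde U F^\dagger$ translates into a commutation with $\sigma_z\otimes\mathbb{1}_{d^2/2}$, forcing the block-diagonal structure $w_1\oplus w_2$ and giving the stated decomposition.

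The main obstacle is in the necessity direction, specifically in upgrading the partial-trace constraint, which a priori controls only the sector of inputs of the form $|\psi\rangle\otimes|0\rangle$, into a global block-diagonality statement for $\widetilde{U} F^\dagger$ in the parity basis of $\widetilde{D}\otimes\widetilde{D}$. The matrix elements of $\widetilde U^\dagger(\widetilde D\otimes\widetilde D)\widetilde U$ outside this "first column" are not directly constrained by Eq.~(2), and the content of the lemma is that the remaining freedom is exactly absorbed by the pair $(w_1,w_2)$ of arbitrary unitaries on the parity sectors. I expect to close this gap by a dimensional count matching the degrees of freedom in $(V,w_1,w_2,F)$ against those of unitaries $U$ compatible with the fixed-point equation, and by using the ancilla structure $\rho_e=|0\rangle\langle 0|$ together with the commutation identities above to reconstruct $U$ from its restriction to $\mathcal{H}\otimes|0\rangle$.
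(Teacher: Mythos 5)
Your reformulation of Eq.~(2) as the fixed-point condition $\tau=\textrm{Tr}_2\bigl[(\mathbb{1}\otimes\rho_e)\,U^\dagger(\tau\otimes\tau)U\bigr]$ is correct, and your sufficiency argument is essentially the paper's. The necessity direction, however, contains a step that is vacuous and a gap that you name but do not close. The moment iteration is circular: substituting $\rho=\tau^n/\textrm{Tr}[\tau^n]$ into Eq.~(2) gives
\begin{equation*}
\textrm{Tr}[\tau^{n+1}]=\textrm{Tr}\bigl[(\tau^n\otimes\rho_e)\,U^\dagger(\tau\otimes\tau)U\bigr]=\textrm{Tr}\bigl[\tau^n\,\textrm{Tr}_2[(\mathbb{1}\otimes\rho_e)U^\dagger(\tau\otimes\tau)U]\bigr]=\textrm{Tr}[\tau^{n+1}],
\end{equation*}
an identity that holds for any $\tau$ satisfying your fixed-point equation and therefore yields no constraint on its spectrum; you cannot derive $\tau^2=\mathbb{1}$ or the balanced multiplicities $d/2$ this way. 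The paper obtains the spectral restriction from a strictly stronger input: it reads Eq.~(2) as the operator identity $\tilde U(D\otimes D)\tilde U^\dagger=D\otimes\mathbb{1}_d$ with $\tilde U=(V^\dagger\otimes V^\dagger)U(V\otimes V)$, so that $D\otimes D$ and $D\otimes\mathbb{1}_d$ are unitarily equivalent; matching spectra with multiplicities forces the eigenvalues of $D$ into $\{0,\pm1\}$, and the counting relations $d\lambda_1=\lambda_1^2+\lambda_{-1}^2$, $d\lambda_0=\lambda_0(2d-\lambda_0)$, $d\lambda_{-1}=2\lambda_1\lambda_{-1}$ give $\lambda_{\pm1}=d/2$, $\lambda_0=0$ and $d$ even. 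Some version of this unitary-equivalence argument is indispensable; the moments of $\tau$ alone will not deliver it.

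The second problem is the one you flag yourself, and it is not resolved by a dimension count. The trace condition with the fixed ancilla $\rho_e=|0\rangle\langle0|$ constrains only the matrix elements $\langle i0|U^\dagger(\tau\otimes\tau)U|j0\rangle$, i.e., it depends on $U$ only through its restriction to $\mathcal{H}\otimes|0\rangle$ (the $d$ columns $U|j0\rangle$), whereas the claimed decomposition $U=(V\otimes V)(w_1\oplus w_2)F(V\otimes V)^\dagger$ is a statement about all of $U$. Counting parameters of $(V,w_1,w_2,F)$ against those of admissible $U$ cannot substitute for an argument here, because the completions of the constrained isometry to a unitary form a group that has no a priori reason to respect the parity block structure. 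The paper sidesteps this entirely by promoting Eq.~(2) to the full operator identity above and then characterizing the unitaries satisfying \emph{that} identity, for which the commutant step ($W=\tilde UF^\dagger$ commutes with $\sigma_z\otimes\mathbb{1}_{d^2/2}$, hence $W=w_1\oplus w_2$) closes the proof. To finish your version you must either adopt that stronger reading of Eq.~(2), or supply an actual argument (not a parameter count) that the partial-trace condition forces the global block structure — and the latter is doubtful, since the condition leaves the action of $U$ on the complement of $\mathcal{H}\otimes|0\rangle$ unconstrained.
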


\begin{proof}
Let us assume that $\tau$ is hermitian, since otherwise $\tau^{\dagger}$ must also be a solution. Then, we can work without lost of generality in the basis in which $\tau$ is diagonal, i.e. $\tau = V D V^{\dagger}$, $D$ a diagonal matrix. Therefore, Eq. $(2)$ is transformed into the equation $\tilde{U} (D\otimes D) \tilde{U}^{\dagger} = D \otimes 1_d$, with $\tilde{U} = (V^{\dagger} \otimes V^{\dagger}) U (V\otimes V)$ also unitary. The necessary condition for the existence of $\tau$ is $\mathit{spec}(D\otimes D) = \mathit{spec}(D \otimes 1_d)$, which trivially means that the eigenvalues of $D$ must be $0$, $1$ or $-1$, with the same degeneration on both sides. If we call $\lambda_{-1}$, $\lambda_0$ and $\lambda_1$ to the degeneration of the eigenvalues $-1$, $0$ and $1$ respectively, a simple counting with the condition $\lambda_{1} +\lambda_{0}+\lambda_{-1} = d$ leads to the equations
\begin{align*}
d \lambda_1 &= \lambda_{1}^{2} + \lambda_{-1}^{2} \\
d \lambda_{0} &= \lambda_{0} (2d-\lambda_{0}) \\
d \lambda_{-1} &= 2\lambda_{1} \lambda_{-1}
\end{align*}
whose only non--trivial solution is $\lambda_{1} = \lambda_{-1} = \frac{d}{2}$ and $\lambda_{0} = 0$ when $d$ is \textit{even}. In other words, $\tau$ must be a unitary rotation of $D = \sigma_z \otimes 1_{(\frac{d}{2})}$. This provides the necessary condition for $(\tau \otimes \tau)$ and $(\tau \otimes 1_d)$ to be connected by \textit{a} unitary but not for the case in which the unitary is already fixed. So let us determine for which unitary matrices such a $\tau$ exists. 

Let $F$ be a flip operator which transforms $F (\sigma_z \otimes 1_{\frac{d}{2}} \otimes \sigma_z \otimes 1_{\frac{d}{2}}) F^{\dagger} = \sigma_z \otimes 1_{\frac{d^2}{2}} $. Then, we have $W \sigma_z \otimes 1_{\frac{d^2}{2}} W^{\dagger} = \sigma_z \otimes 1_{\frac{d^2}{2}} $, with $W = \tilde{U} F^{\dagger}$. This means that the most general possible $W = w_1 \oplus w_2$, with $w_1$, $w_2$ two unitary matrices of dimension $\frac{d^2}{2}$ and the result follows straightforwardly.
\end{proof}

$\bullet$ Now we prove theorem 2 in the paper.
\begin{proof}
\begin{eqnarray}
\langle \tau\otimes\tau \rangle&=&\textrm{Tr}[U(\rho\otimes\rho_{e})U^{\dag}(\tau\otimes \tau)]\\ \nonumber &=&\textrm{Tr}[\sum^{n}_{i,j=1}s_{i}\rho s^{\dag}_{j} \tau \otimes x_{i} \rho_{e} x^{\dag}_{j} \tau].
\end{eqnarray}
The second subspace, $x_i \rho x^\dag _j \tau$, will only contribute to the trace when it is diagonal. Since $\tau$ is antidiagonal the only effective combination of $i,j$ is ($i=1, j=n$), ($i=2,j=n-1$),$\ldots$ When applying this restriction to the first subspace, $s_i \rho s^\dag _j \tau$, the global expresion turns out to be Tr[$\rho \tau$], only if the matrix elements of $\tau$ are $0$ or $1$. We have thus shown how $\tau$ is promoted into the global state of the system.\\
\end{proof}

\section*{Properties of the cloning operation}
$\bullet$ We show here that the $\tau$ operator is also transmitted when cloning in any basis. When the unitary operation $U$ is rotated with the matrix $R$ it propagates the information of the rotated matrix $\tau '$. 
\begin{eqnarray*}
&&\text{Tr}[\rho \tau']=\text{Tr}[U' (\rho \otimes \rho_e) U'^\dag ( \tau\otimes\tau ) ]= \\ && \text{Tr}[(R^{\dag} \otimes R^{\dag}) U (R \otimes \mathbb{1})( \rho \otimes \rho_{e})( R^{\dag} \otimes \mathbb{1}) U^{\dag} \\ && ( R \otimes R)( R^\dag \tau R\otimes R^{\dag}\tau R)]=\\ && \text{Tr}[U (R\otimes \mathbb{1})( \rho\otimes\rho_e)( R^\dag \otimes \mathbb{1}) U^\dag( R\otimes R)( R^\dag \tau\otimes R^\dag \tau)]= \\ && \text{Tr}[U R\rho R^{\dag} U^{\dag} (\tau\otimes \tau)]=\text{Tr}[U \tilde{\rho} U^{\dag} (\tau \otimes \tau)]=\text{Tr}[\tilde{\rho}\tau].
\end{eqnarray*}
\begin{eqnarray*}
\text{Tr}[\rho \tau']=\text{Tr}[\rho R^{\dag}\tau R]=\text{Tr}[R \rho R^{\dag} \tau]=\text{Tr}[\tilde{\rho}\tau].
\end{eqnarray*}

\end{document}